\theoremstyle{remark}
\newtheorem{remark}{Remark}
\theoremstyle{plain}
\newtheorem{lemma}{Lemma}
\newtheorem{proposition}{Proposition}
\newtheorem{theorem}{Theorem}
\newtheorem{corollary}{Corollary}
\def\BibTeX{{\rm B\kern-.05em{\sc i\kern-.025em b}\kern-.08em
T\kern-.1667em\lower.7ex\hbox{E}\kern-.125emX}}
\newacronym{PCA}{PCA}{principal component analysis}
\newacronym{MDSM}{MDSM}{multi-domain sparse modulation}
\newacronym{P2P}{P2P}{point-to-point}
\newacronym{OTAC}{AirComp}{over-the-air computing}
\newacronym{TX}{TX}{transmitter}
\newacronym{RX}{RX}{receiver}
\newacronym{IoT}{IoT}{Internet of Things}
\newacronym{AI/ML}{AI/ML}{artifitial intelligence/machine learning}
\newacronym{SDR}{SDR}{semi-definite relaxation}
\newacronym{EVD}{EVD}{eigenvalue decomposition}
\newacronym{GR}{GR}{Gaussian randomization}
\newacronym{SCA}{SCA}{successive convex approximation}
\newacronym{BnB}{BnB}{branch and bound}
\newacronym{QT}{QT}{quadratic transform}
\newacronym{RQ}{RQ}{Rayleigh quotient}
\newacronym{SOCP}{SOCP}{second-order cone programming}
\newacronym{CDF}{CDF}{cumulative distribution function}
\newacronym{UF}{UF}{uniform-forcing}
\newacronym{AP}{AP}{access point}
\newacronym{RSDR}{R-SDR}{regularized semi-definite relaxation}
\newacronym{R-SDR}{R-SDR}{regularized SDR}
\newacronym{flops}{flops}{floating point operations}
\newacronym{ED}{ED}{edge device}
\newacronym{SINR}{SINR}{signal to interference-plus-noise ratio}
\newacronym{SIC}{SIC}{successive interference cancellation}
\newacronym{CSI}{CSI}{channel state information}
\newacronym{LoS}{LoS}{line-of-sight}
\newacronym{NLoS}{NLoS}{non-LoS}
\newacronym{RPE}{RPE}{radar parameter estimation}
\newacronym{OTFS}{OTFS}{orthogonal time frequency space}
\newacronym{AFDM}{AFDM}{affine frequency division multiplexing}
\newacronym{CRLB}{CRLB}{Cram{\`e}r-Rao lower bound}
\newacronym{BCRLB}{BCRLB}{Bayesian Cram{\`e}r-Rao lower bound}
\newacronym{BBI}{BBI}{Bayesian bilinear inference}
\newacronym{AoA}{AoA}{angle-of-arrival}
\newacronym{SNR}{SNR}{signal-to-noise ratio}
\newacronym{ML}{ML}{maximum likelihood}
\newacronym{MIMO}{MIMO}{multiple-input multiple-output}
\newacronym{MISO}{MISO}{multiple-input single-output}
\newacronym{SIMO}{SIMO}{single-input multiple-output}
\newacronym{SISO}{SISO}{single-input single-output}
\newacronym{MUSIC}{MUSIC}{multiple signal classification}
\newacronym{MU}{MU}{multi-user}
\newacronym{ROOT-MUSIC}{ROOT-MUSIC}{ROOT multiple signal classification}
\newacronym{JCAS}{JCAS}{joint communication and sensing}
\newacronym{JCR}{JCR}{joint communications and radar}
\newacronym{ISAC}{ISAC}{integrated sensing and communications}
\newacronym{3D}{3D}{three-dimensional}
\newacronym{2D}{2D}{two-dimensional}
\newacronym{1D}{1D}{one-dimensional}
\newacronym{BF}{BF}{beamforming}
\newacronym{ROI}{ROI}{region of interest}
\newacronym{mmWave}{mmWave}{millimeter-wave}
\newacronym{MF}{MF}{matched-filter}
\newacronym{DD}{DD}{delay-Doppler}
\newacronym{SotA}{SotA}{state-of-the-art}
\newacronym{ULA}{ULA}{uniform linear array}
\newacronym{QAM}{QAM}{quadrature amplitude modulation}
\newacronym{ISFFT}{ISFFT}{inverse symplectic finite Fourier transform}
\newacronym{SFFT}{SFFT}{symplectic finite Fourier transform}
\newacronym{ISI}{ISI}{inter-symbol interference}
\newacronym{AWGN}{AWGN}{additive white Gaussian noise}
\newacronym{MSE}{MSE}{mean-squared-error}
\newacronym{LMMSE}{LMMSE}{linear minimum mean square error}
\newacronym{RMSE}{RMSE}{root mean square error}
\newacronym{ESPRIT}{ESPRIT}{estimation of signal parameters via rotational invariant techniques}
\newacronym{OFDM}{OFDM}{orthogonal frequency division multiplexing}
\newacronym{OCDM}{OCDM}{orthogonal chirp division multiplexing}
\newacronym{BS}{BS}{base station}
\newacronym{UE}{UE}{user equipment}
\newacronym{JCEDD}{JCEDD}{joint channel estimation and data detection}
\newacronym{PDA}{PDA}{probabilistic data association}
\newacronym{PMF}{PMF}{probability mass function}
\newacronym{PBiGaBP}{PBiGaBP}{parametric bilinear Gaussian belief propagation}
\newacronym{PBiGAMP}{PBiGAMP}{parametric bilinear generalized approximate message passing}
\newacronym{GaBP}{GaBP}{Gaussian belief propagation}
\newacronym{FT}{FT}{frequency-time}
\newacronym{DFT}{DFT}{discrete Fourier transform}
\newacronym{IDFT}{IDFT}{inverse discrete Fourier transform}
\newacronym{TD}{TD}{time domain}
\newacronym{wlg}{w.l.g.}{without loss of generality}
\newacronym{CP}{CP}{cyclic prefix}
\newacronym{DAF}{DAF}{discrete affine Fourier}
\newacronym{DAFT}{DAFT}{discrete affine Fourier transform}
\newacronym{IDAFT}{IDAFT}{inverse discrete affine Fourier transform}
\newacronym{CPP}{CPP}{\textit{chirp-periodic} prefix}
\newacronym{IDZT}{IDZT}{inverse discrete Zak transform}
\newacronym{DZT}{DZT}{discrete Zak transform}
\newacronym{P/S}{P/S}{parallel-to-serial}
\newacronym{S/P}{S/P}{serial-to-parallel}
\newacronym{SBL}{SBL}{sparse Bayesian learning}
\newacronym{MPA}{MPA}{message passing algorithms}
\newacronym{EM}{EM}{expectation maximization}
\newacronym{sIC}{soft IC}{soft interference cancellation}
\newacronym{soft RG}{soft RG}{soft replica generation}
\newacronym{BG}{BG}{belief generation}
\newacronym{SGA}{SGA}{scalar Gaussian approximation}
\newacronym{CLT}{CLT}{central limit theorem}
\newacronym{PDF}{PDF}{probability density function}
\newacronym{QPSK}{QPSK}{quadrature phase-shift keying}
\newacronym{ICI}{ICI}{inter-carrier interference}
\newacronym{BER}{BER}{bit error rate}
\newacronym{DoF}{DoF}{degrees-of-freedom}
\newacronym{VGA}{VGA}{vector Gaussian approximation}
\newacronym{FD}{FD}{full-duplex}
\newacronym{NMSE}{NMSE}{normalized mean square error}
\newacronym{KL}{KL}{Kullback-Leibler}
\newacronym{LASSO}{LASSO}{least absolute shrinkage and selection operator}
\newacronym{FP}{FP}{fractional programming}
\newacronym{CC}{CC}{communication-centric}
\newacronym{RC}{RC}{raised-cosine}
\newacronym{RRC}{RRC}{root raised-cosine}
\newacronym{6G}{6G}{sixth-generation}
\newacronym{V2X}{V2X}{vehicle-to-everything}
\newacronym{LEO}{LEO}{low-earth orbit}
\newacronym{I/O}{I/O}{input-output}
\newacronym{CE}{CE}{channel estimation}
\newacronym{ICC}{ICC}{integrated communication and computing}
\newacronym{ISCC}{ISCC}{integrated sensing, communications and computing}
\newacronym{PAM}{PAM}{pulse amplitude modulation}
\newacronym{iid}{i.i.d.}{independent and identically distributed}
\newacronym{MEC}{MEC}{mobile edge computing}
\newacronym{REMS}{REMS}{reconfigurable electromagnetic structure}
\newacronym{RIS}{RIS}{reconfigurable intelligent surface}
\newacronym{MMSE}{MMSE}{minimum mean square error}
\newacronym{DPC}{DPC}{dirty paper coding}
\newacronym{GAN}{GAN}{generative adversarial network}
\newacronym{LLR}{LLR}{log-likelihood ratio}
\newacronym{SPRT}{SPRT}{sequential probability ratio test}
\newacronym{HMM}{HMM}{hidden Markov model}
\newacronym{EMA}{EMA}{exponential moving averages}
\newacronym{AR-PLA}{AR-PLA}{adversarially robust physical layer authentication}
\newacronym{5G}{5G}{fifth generation}
\newacronym{PLA}{PLA}{physical layer authentication}
\newacronym{AUC}{AUC}{area under the curve}
\begin{document}


\title{Adversarially Robust MIMO Physical Layer Authentication for Non-Stationary  Channels}

\author{
    Ali Khandan Boroujeni\textsuperscript{\orcidlink{0009-0003-5007-8535}},~\IEEEmembership{Graduate Student Member,~IEEE,}
    Ghazal Bagheri\textsuperscript{\orcidlink{0009-0006-2740-8235}},~\IEEEmembership{Graduate Student Member,~IEEE,} \\
    Kuranage Roche Rayan Ranasinghe\textsuperscript{\orcidlink{0000-0002-6834-8877}},~\IEEEmembership{Graduate Student Member,~IEEE,} \\
    Giuseppe Thadeu Freitas de Abreu\textsuperscript{\orcidlink{0000-0002-5018-8174}},~\IEEEmembership{Senior Member, IEEE,}
    Stefan Köpsell\textsuperscript{\orcidlink{0000-0002-0466-562X}},~\IEEEmembership{Senior Member, IEEE,} \\
    and
    Rafael F. Schaefer\textsuperscript{\orcidlink{0000-0002-1702-9075}},~\IEEEmembership{Senior Member, IEEE}
    \thanks{Ali Khandan Boroujeni, Stefan Köpsell, and Rafael F. Schaefer are with the Barkhausen Institut and Technische Universit\"at Dresden, 01067 Dresden, Germany (emails: ali.khandanboroujeni@barkhauseninstitut.org; \{stefan.koepsell,rafael.schaefer\}@tu-dresden.de).}
    \thanks{Ghazal Bagheri is with Technische Universit\"at Dresden, 01187 Dresden, Germany (email: ghazal.bagheri@tu-dresden.de).}
    \thanks{Kuranage Roche Rayan Ranasinghe and Giuseppe Thadeu Freitas de Abreu are with the School of Computer Science and Engineering, Constructor University (previously Jacobs University Bremen), Campus Ring 1, 28759 Bremen, Germany (emails: \{kranasinghe,gabreu\}@constructor.university).}
    
}

\maketitle

\begin{abstract}
We propose an \ac{AR-PLA} framework tailored for non-stationary \ac{MIMO} wireless channels. 
The framework integrates sequential Bayesian decision-making, deep feature extraction via contrastive learning, and generative adversarial modeling to simulate adaptive spoofers. 
Unlike conventional methods that assume stationary channels or independent observations, our approach explicitly accounts for temporal and spatial correlations, \ac{LoS} blockages, and dynamic spoofing strategies. 
A comprehensive analytical characterization of the authentication performance using both 2-state and 3-state \acp{HMM} with moving-average online adaptation is also provided, with closed-form recursions for log-likelihood ratios, detection probabilities, and steady-state approximations, which demonstrate significant robustness improvement over classical sequential authentication schemes.
\end{abstract}

\begin{IEEEkeywords}
PLA, AR-PLA, MIMO, HMM.
\end{IEEEkeywords}

\glsresetall

\maketitle

\section{Introduction}

In the era of ubiquitous wireless connectivity, securing communication systems against sophisticated spoofing attacks has become paramount, particularly in dynamic environments such as \ac{5G}/\ac{6G} networks and \ac{IoT} deployments \cite{GuTIFS2020, BaiJCIN2020, DEVI2025, BagheriGIIS2024,BourjeniJCS2024}. 
In this context, \ac{PLA} emerges as a promising paradigm that verifies transmitter legitimacy by exploiting inherent channel characteristics, thereby circumventing the vulnerabilities of upper-layer cryptographic methods susceptible to computational attacks \cite{Wang2025,XiaoTWC2008, XieCST_2021}. 

Traditional \ac{PLA} schemes often assume channel stationarity and observation independence. 
These assumptions falter in real-world non-stationary channels, where mobility induces temporal-spatial correlations and LoS blockages. 
Machine learning has become increasingly vital in PLA, enabling adaptive feature extraction and pattern recognition in complex, non-stationary channel environments where traditional statistical models fail \cite{SenigagliesiTIEFS_2021, AbdrabouTC_2022, WangCL_2017, FangTC_2019, EZZATIKHATAB2025, MENG2025}.
These limitations are compounded by sophisticated adversarial strategies \cite{YuTIFS2008} exploiting deep learning and generative modeling to mimic legitimate \ac{CSI}, and by adaptive spoofers that reproduce legitimate signals, thereby necessitating robust frameworks that integrate machine learning for feature extraction and adversarial modeling \cite{LiuCOMMST2017,QuiACCESS2020, HoangCST_2024}. 
Moreover, few prior works provide analytical characterizations of detection and false-alarm probabilities under sequential decision rules with temporally correlated observations.

To address these challenges, we propose an \ac{AR-PLA} framework tailored for non-stationary \ac{MIMO} wireless channels \cite{ZhangTWC2020, BagheriJCN2025, boroujeni2025frequency, MiwaWPMC2023}, which inherently offer rich spatial and temporal diversity that can enhance authentication. 
Our approach combines sequential Bayesian decision-making with deep feature extraction via contrastive learning and adversarial modeling using \acp{GAN} to simulate and counter dynamic spoofers \cite{ZhaoTCCN_2025, Xie2024, Xu2022, Shi2021}. 
Unlike conventional methods that overlook channel dynamics \cite{WangINFOCOM2020}, our framework explicitly captures temporal dependencies through 2-state and 3-state \acp{HMM} augmented with moving-average online adaptation, thereby yielding closed-form recursions for log-likelihood ratios, analytical detection probabilities, and steady-state performance guarantees. 
This integrated design yields substantial robustness gains over classical sequential authentication techniques, as confirmed by both analytical characterization and simulation results.

The contributions above are developed as follows: Section \ref{sec:system_model} details the system model and a first strategy for \ac{AR-PLA}, Section \ref{sec:PLA-LoS} presents the \ac{AR-PLA} framework for \ac{LoS} and \ac{NLoS} conditions and Section \ref{sec:perf_alaly} gives a computational complexity analysis and numerical results.

The main contributions of this work are summarized as:

\begin{itemize}
\item \textbf{\ac{AR-PLA} framework:} A novel integration of sequential Bayesian decision-making, deep feature embeddings, and adversarial generative modeling for robust authentication against adaptive spoofers.
\item \textbf{\ac{HMM}-based sequential inference:} Extension of the classical \ac{SPRT} to temporally correlated embeddings via 2-state and 3-state \acp{HMM}, explicitly capturing \ac{LoS}/\ac{NLoS} dynamics, with exponential moving averages for real-time parameter adaptation.
\item \textbf{Non-stationary \ac{MIMO} modeling:} Adoption of a state-space model with temporal evolution and Kronecker-based spatial correlation to realistically capture time-varying \ac{MIMO} channels.
\item \textbf{Analytical characterization:} Derivation of exact recursions for log-posterior ratios, statistical moments of instantaneous \acp{LLR}, the \ac{CDF}/\ac{PDF} of sequential test statistics, and closed-form steady-state approximations.
\end{itemize}

\section{System and Channel Model}
\label{sec:system_model}

We consider a \ac{MIMO} system with a legitimate transmitter (Alice), a legitimate receiver (Bob), and a potential spoofer (Eve).
Alice, Bob, and Eve are equipped with \acp{ULA}, where Alice and Eve use $M_t$ transmit antennas and Bob employs $M_r$ receive antennas.
The non-stationary (time-varying) \ac{MIMO} channel between the transmitter $X \in \{\text{Alice}, \text{Eve}\}$ and receiver (Bob) at time slot $t$ is denoted by
\begin{equation}
\mathbf{H}_X(t) \in \mathbb{C}^{M_r \times M_t}.
\end{equation}

To capture realistic temporal–spatial correlations caused by mobility and propagation dynamics, we adopt a state-space model with Kronecker-based spatial correlation.

\subsection{Time--Spatially Correlated Noisy MIMO Channel Model}

We model channel dynamics through a state-space formulation with spatial correlation and noisy observations. Specifically, the \emph{whitened} channel matrix $\mathbf{H}_{w,X}(t) \in \mathbb{C}^{M_r \times M_t}$ evolves as a discrete-time linear state-space process  
\begin{equation}
\mathrm{vec}\big(\mathbf{H}_{w,X}(t)\big) 
= \mathbf{F}_t \, \mathrm{vec}\big(\mathbf{H}_{w,X}(t-1)\big) 
+ \mathbf{w}_X(t),
\label{eq:state_space}
\end{equation}

\begin{figure}[b!]
\centering
\includegraphics[width=\columnwidth]{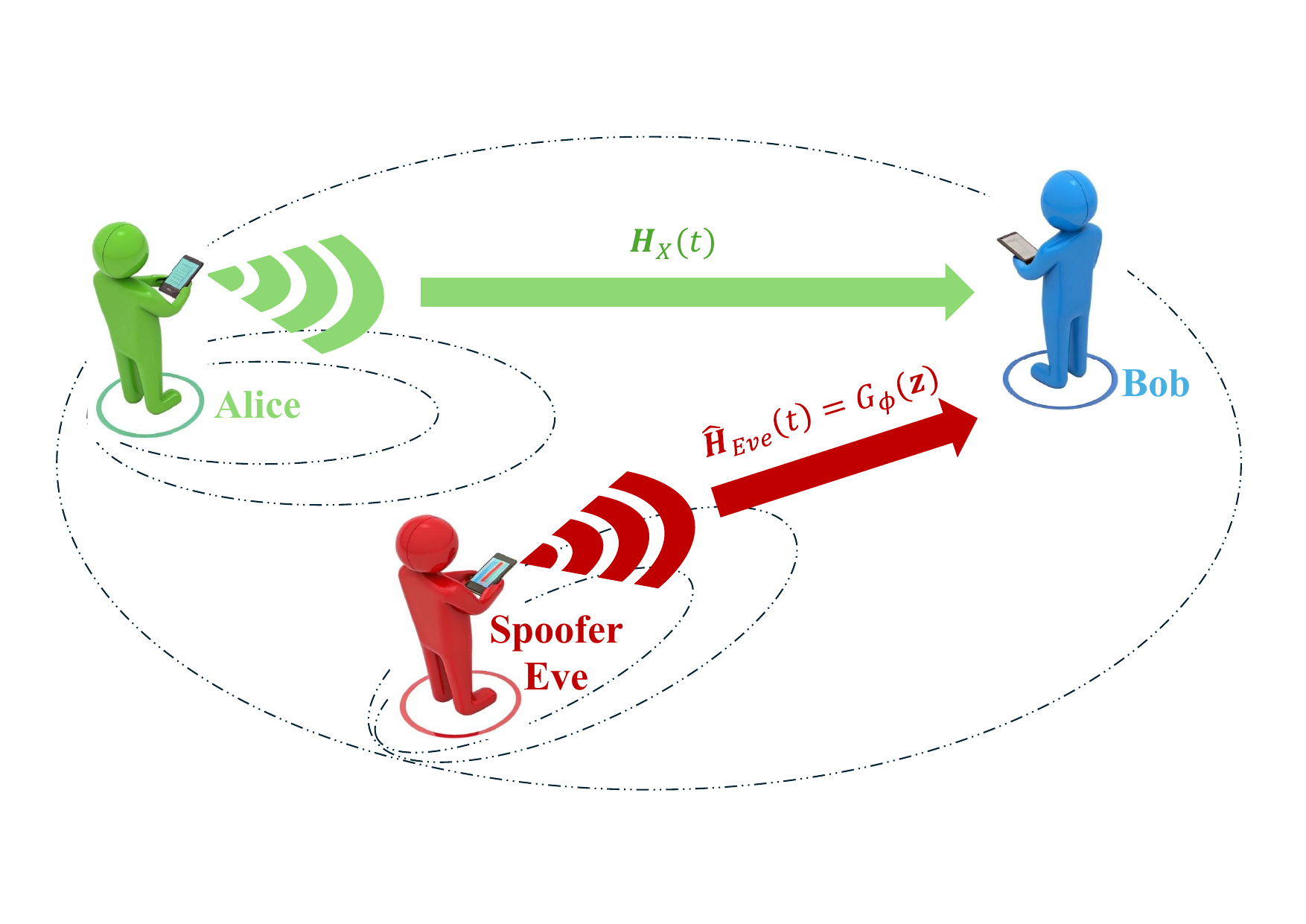}
\vspace{-2ex}
\caption{Illustration of the proposed \ac{AR-PLA} system model, depicting legitimate users Alice and Bob and the adversarial spoofer Eve.}
\label{fig:system_model}
\vspace{-2ex}
\end{figure}

\noindent where $\mathbf{F}_t \in \mathbb{C}^{M_r M_t \times M_r M_t}$ is the (possibly time-varying) state-transition matrix capturing temporal channel dynamics, and $\mathbf{w}_X(t) \sim \mathcal{CN}(\mathbf{0}, \mathbf{Q}_t)$ is the process noise with covariance $\mathbf{Q}_t$, representing unmodeled variations due to mobility and scattering. This formulation captures temporal correlation and non-stationarity in the fading process, with $\mathbf{H}_{w,X}(t)$ denoting the component prior to spatial correlation.

The \emph{spatial correlation} of the transmit and receive arrays is described using the Kronecker model~\cite{Wang_TETCI2018}, leading to the correlated MIMO channel  
\begin{equation}
\mathbf{H}_X(t) = \mathbf{R}_r^{1/2} \, \mathbf{H}_{w,X}(t) \, \mathbf{R}_t^{1/2},
\label{eq:kronecker}
\end{equation}
where $\mathbf{R}_r \in \mathbb{C}^{M_r \times M_r}$ and $\mathbf{R}_t \in \mathbb{C}^{M_t \times M_t}$ are the receive and transmit spatial correlation matrices, respectively. This decomposition separates temporal fading from array correlation yet retains physical interpretability. 

Finally, the receiver’s observation is corrupted by noise, yielding
\begin{equation}
\hat{\mathbf{H}}_X(t) = \mathbf{H}_X(t) + \mathbf{N}_X(t),
\label{eq:noisy_observation}
\end{equation}
where $\mathbf{N}_X(t) \sim \mathcal{CN}(\mathbf{0}, \sigma^2 \mathbf{I})$ denotes additive white Gaussian noise (AWGN), accounting for estimation errors and measurement imperfections.  

This unified model provides a realistic representation of time-varying, spatially correlated MIMO channels under noisy observations, and forms the foundation for the subsequent analysis of physical-layer authentication and secure communications schemes.

\subsection{Adversarial Spoofing via Generative Modeling}

To simulate an adaptive and sophisticated spoofer (Eve), we adopt a \ac{GAN} framework, which learns to mimic the distribution of the legitimate transmitter's \ac{CSI}. 
Specifically, Eve's goal is to generate synthetic \ac{CSI} samples that closely resemble those of Alice, thereby deceiving Bob and bypassing physical layer authentication.

Let the generator network be denoted by $G_\phi: \mathbb{R}^d \to \mathbb{C}^{M_r \times M_t}$, parameterized by $\phi$, which maps a latent random vector $\mathbf{z} \sim \mathcal{N}(\mathbf{0}, \mathbf{I}_d)$ sampled from a $d$-dimensional standard complex Gaussian distribution into a synthetic spoofed \ac{CSI} matrix.
Then the aforementioned matrix can be expressed as
\vspace{-1ex}
\begin{equation}
\hat{\mathbf{H}}_{\mathrm{Eve}}(t) = G_\phi(\mathbf{z}).
\vspace{-1ex}
\end{equation}

The discriminator network $D_\psi: \mathbb{C}^{M_r \times M_t} \to [0,1]$, parameterized by $\psi$, is trained to distinguish between real \ac{CSI} samples drawn from the legitimate distribution $p_{\mathrm{Alice}}$ and synthetic samples generated by $G_\phi$. 
The discriminator outputs the probability that an input \ac{CSI} matrix originates from Alice rather than Eve.

The adversarial training procedure follows the classical min-max optimization problem, which is given by
\vspace{-0.5ex}
\begin{align}
\label{eq:gan_loss}
\min_{\phi} \max_{\psi} \quad
&\mathbb{E}_{\mathbf{H} \sim p_{\mathrm{Alice}}} \left[ \log D_\psi(\mathbf{H}) \right] \\[-1ex]
&+ \mathbb{E}_{\mathbf{z} \sim \mathcal{N}(\mathbf{0}, \mathbf{I}_d)} \left[ \log \big(1 - D_\psi(G_\phi(\mathbf{z}))\big) \right],
\nonumber
\vspace{-0.5ex}
\end{align}
where \(D_\psi(\cdot)\) is the discriminator’s output, which can be interpreted as the probability that the input is real (i.e., from Alice), \(G_\phi(\mathbf{z})\) is the generator’s output which is a fake sample generated from a random input \(\mathbf{z}\), \(p_{\mathrm{Alice}}\)  is the true distribution of Alice’s \ac{CSI} and {\(\phi(\mathbf{z})\) is the distribution of latent noise input (e.g., Gaussian).}

At equilibrium, the \ac{GAN} learns to generate spoofed \ac{CSI} samples that closely approximate Alice’s true distribution $p_{\mathrm{Alice}}$, including spatial and temporal correlations. Eve adapts dynamically by retraining $G_\phi$ as channel statistics evolve, achieving realistic spoofing in non-stationary environments. This data-driven approach removes the need for an explicit channel model, while authentication systems typically rely on cumulative \ac{LLR} tests with Gaussian embeddings and stopping rules to detect such attacks.

\vspace{-1ex}
\subsection{Sequential Bayesian Authentication Decision with Temporally Correlated Embeddings}

We consider the problem of authenticating a received signal sequence based on extracted feature embeddings, denoted by $\mathcal{Z}_1^T = \{\mathbf{z}_1, \cdots, \mathbf{z}_t, \cdots, \mathbf{z}_T\}$, where $\mathbf{z}_t \in \mathbb{R}^d$ is the embedding extracted at time step $t$. 
Although the \ac{MIMO} channel is complex-valued, the deep feature extractor outputs real-valued embeddings that both retain discriminative information and enable modeling with a standard multivariate Gaussian \acp{PDF} for the \ac{SPRT}/\ac{HMM} framework.
{\ac{SPRT} is a sequential hypothesis testing method that decides, based on real-valued CSI embeddings collected over time, whether the transmitter is legitimate (Alice) or a spoofer (Eve).}

The encoder maps complex \ac{CSI} into $\mathbb{R}^d$ (e.g., via real/imaginary concatenation or magnitude/phase features), ensuring compatibility with the real-valued Gaussian emission models used in our proposed sequential Bayesian framework. 
The authentication task is formulated as a binary hypothesis test between the legitimate transmitter Alice ($H_0$) and an adversarial spoofer Eve ($H_1$), expressed as
%
\begin{itemize}
\item $H_0$: $\mathbf{z}_t, \forall t$ are generated by Alice,
\item $H_1$: $\mathbf{z}_t, \forall t$ are generated by Eve.
\end{itemize}

\subsection{Sequential Authentication with Independent Observations}

We adopt a sequential decision method that evaluates observations step by step, deciding whether to classify as “Alice,” “Eve,” or wait for more evidence. Under the classical assumption that the embeddings $\{\mathbf{z}_t\}_{t=1}^T$ are \ac{iid} conditioned on the hypothesis, the sequential decision is based on the cumulative \ac{LLR}
\begin{equation}
\Lambda_T = \sum_{t=1}^T \log \frac{p(\mathbf{z}_t \mid H_0)}{p(\mathbf{z}_t \mid H_1)}.
\label{eq:cumulative_llr}
\end{equation}

The class-conditional densities $p(\mathbf{z} \mid H_k)$, $k \in \{0,1\}$, are modeled as multivariate Gaussian distributions, expressed as
\begin{align}
\label{eq:gaussian_pdf}
p(\mathbf{z} \mid H_k) &= \mathcal{N}(\mathbf{z}; \boldsymbol{\mu}_k, \boldsymbol{\Sigma}_k) \\ 
&\hspace{-3ex}= \frac{1}{\sqrt{(2\pi)^d |\boldsymbol{\Sigma}_k|}} \exp \left(-\frac{1}{2} (\mathbf{z} - \boldsymbol{\mu}_k)^\top \boldsymbol{\Sigma}_k^{-1} (\mathbf{z} - \boldsymbol{\mu}_k)\right). \nonumber
\end{align}

Substituting \eqref{eq:gaussian_pdf} into \eqref{eq:cumulative_llr}, the instantaneous \ac{LLR} for the $t$-th sample can be explicitly derived as
\begin{align}
\log \frac{p(\mathbf{z}_t \mid H_0)}{p(\mathbf{z}_t \mid H_1)} &= \frac{1}{2} \Bigg[ \log \frac{|\boldsymbol{\Sigma}_1|}{|\boldsymbol{\Sigma}_0|} - (\mathbf{z}_t - \boldsymbol{\mu}_0)^\top \boldsymbol{\Sigma}_0^{-1} (\mathbf{z}_t - \boldsymbol{\mu}_0) \nonumber \\
&+ (\mathbf{z}_t - \boldsymbol{\mu}_1)^\top \boldsymbol{\Sigma}_1^{-1} (\mathbf{z}_t - \boldsymbol{\mu}_1) \Bigg].
\end{align}

Then, a sequential decision can be made by comparing $\Lambda_T$ against thresholds $(\gamma_0, \gamma_1)$, given by
\begin{equation}
\Lambda_T
\begin{cases}
    \geq \gamma_1, & \text{decide } H_0 \text{ (authenticate as Alice)}, \\
    \leq \gamma_0, & \text{decide } H_1 \text{ (detect spoofing)}, \\
    \text{otherwise}, & \text{continue acquiring observations},
\end{cases}
\label{eq:sequential_decision}
\end{equation}
where $\gamma_0 < 0 < \gamma_1$ are designed to balance the trade-off between false alarms and missed detections. 

This framework corresponds to the classical \ac{SPRT}, known to minimize the expected sample size for given error constraints under \ac{iid} conditions.

\subsection{Extension to Temporally Correlated Embeddings via Hidden Markov Models}

In practical wireless environments, feature embeddings extracted from received signals exhibit temporal correlations caused by channel dynamics, user mobility, and filtering effects. 
To model such temporal dependencies and improve authentication robustness, we extend the framework to incorporate a \ac{HMM} that captures the evolution of the hidden transmitter state.

We model the hidden state sequence $\{S_t\}_{t=1}^T$ as a two-state Markov chain with state space
\begin{equation*}
S_t \in \{0,1\}, \quad 
\begin{cases}
    0: \text{Alice (legitimate)}, \\
    1: \text{Eve (spoofing)}.
\end{cases}
\end{equation*}

The Markov chain is parameterized by the initial state distribution $\boldsymbol{\pi} = [\pi_0, \pi_1]^\top$, where $\pi_k = P(S_1 = k)$, and the state transition matrix
\begin{equation}
\mathbf{A} = \begin{bmatrix}
a_{00} & a_{01} \\
a_{10} & a_{11}
\end{bmatrix}, \quad a_{ij} = P(S_t = j \mid S_{t-1} = i).
\end{equation}

Conditioned on the current hidden state, the observation $\mathbf{z}_t$ is drawn from the emission distribution
\begin{equation}
\mathbf{z}_t \mid S_t = k \sim p(\mathbf{z}_t \mid S_t = k),
\end{equation}
which, for analytical tractability, is often modeled as Gaussian, yielding
\begin{equation}
p(\mathbf{z}_t \mid S_t = k) = \mathcal{N}(\boldsymbol{\mu}_k, \boldsymbol{\Sigma}_k).
\end{equation}

Although deep embeddings may be non-Gaussian, they often cluster elliptically, making Gaussian emissions a justified approximation that enables closed-form inference, simplifies threshold design, and performs well in practice \cite{Wang_TETCI2018}.

\subsubsection{Sequential Bayesian Inference via Forward Algorithm}

The posterior probability of the hidden state given all observations up to time $T$ is
\begin{equation}
\label{eq:SeqBay}
P(S_t = k \mid \mathcal{Z}_1^T) = P(S_t = k \mid \mathbf{z}_1, \ldots, \mathbf{z}_T), \quad k \in \{0,1\}.
\end{equation}

This is efficiently computed via the forward variables $\alpha_t(k)$ defined as
\begin{equation}
\label{eq:alpha_T}
\alpha_t(k) := P(\mathbf{z}_1^T, S_t = k) = p(\mathbf{z}_T \mid S_t = k) \sum_{j=0}^1 \alpha_{t-1}(j) a_{jk},
\end{equation}
with initialization
\begin{equation}
\alpha_1(k) = \pi_k \, p(\mathbf{z}_1 \mid S_1 = k).
\end{equation}

The posterior is obtained by normalization
\begin{equation}
P(S_t = k \mid \mathcal{Z}_1^T) = \frac{\alpha_t(k)}{\sum_{j=0}^1 \alpha_t(j)}.
\end{equation}

We define the log-posterior ratio
\begin{equation}
\label{eq:lambda}
\Lambda_T^{\text{HMM}} := \log \frac{P(S_t = 0 \mid \mathcal{Z}_1^T)}{P(S_t = 1 \mid \mathcal{Z}_1^T)}.
\end{equation}

The sequential decision rule is analogous to \eqref{eq:sequential_decision}, and is given as
\begin{equation}
\Lambda_T^{\text{HMM}} 
\begin{cases}
    \geq \gamma_1, & \text{authenticate as Alice}, \\
    \leq \gamma_0, & \text{detect spoofing}, \\
    \text{otherwise}, & \text{continue observation}.
\end{cases}
\end{equation}

The HMM-based approach leverages temporal correlations and sequential inference to detect intermittent spoofing, model abrupt state changes, and adapt to non-stationary environments, using flexible emission distributions and EM-based parameter estimation for robust physical-layer authentication in dynamic wireless settings.

\subsection{Online Adaptation via Exponential Moving Averages}

To account for the non-stationarity of wireless channels, we adopt an \emph{online adaptation strategy} based on \ac{EMA}. 
At each time step $t$, the estimated mean and covariance of the embedding distribution under hypothesis $H_0$ (Alice) are recursively updated as
\begin{eqnarray}
&\boldsymbol{\mu}_0^{(t)} = \beta \boldsymbol{\mu}_0^{(t-1)} + (1 - \beta) \, \mathbf{z}_t ,&\label{eq:ema_mean} \\
&\boldsymbol{\Sigma}_0^{(t)} = \beta \boldsymbol{\Sigma}_0^{(t-1)}\!+\!(1\!-\!\beta) \, (\mathbf{z}_t\!-\!\boldsymbol{\mu}_0^{(t)})(\mathbf{z}_t\!-\!\boldsymbol{\mu}_0^{(t)})^\top,& \label{eq:ema_cov}
\end{eqnarray}
where $\beta \in [0,1)$ is a \emph{forgetting factor} that controls the adaptation rate. 

Smaller values of $\beta$ result in faster adaptation to new channel conditions, while larger values emphasize historical statistics for smoother updates.
We use the updated mean in the \ac{EMA} covariance computation to ensure the covariance is centered correctly and avoid the small bias introduced by using the previous mean.
This low-complexity recursive update avoids storing historical embeddings and allows real-time adaptation to time-varying embedding distributions induced by dynamic wireless channels, which is beneficial in channels with mobility, scattering changes, or slow-fading. 

\begin{remark}
The same update rule can be independently applied to hypothesis $H_1$ (Eve) if online adaptation of the adversarial distribution is desired, e.g., in the presence of adaptive spoofing attacks.
\end{remark}

\subsection{Analytical Characterization of the 2-State EMA-HMM}
\label{sec:2state_proofs}

We now analyze the statistical behavior of the two-state \ac{HMM} with \ac{EMA} updates. Let $S_t\in\{0,1\}$ denote the hidden state at time $t$ (0: Alice, 1: Eve), with initial distribution $\pi_k=P(S_1=k)$ and transition matrix $A=[a_{ij}]$, where $a_{ij}=P(S_t=j\mid S_{t-1}=i)$. Conditioned on the hidden state, the observed embeddings follow a Gaussian distribution,
\begin{equation}
\mathbf z_t\mid S_t=k \sim \mathcal N(\mu_k,\Sigma_k).
\end{equation}

To evaluate authentication performance, we focus on the forward variables $\alpha_t(k)=P(\mathbf z_1^t,S_t=k)$, which allow us to recursively compute the log-posterior ratio
\begin{equation}
\Lambda_T^{\text{HMM}} := \log\frac{P(S_t=0\mid \mathbf z_1^t)}{P(S_t=1\mid \mathbf z_1^t)}
= \log\frac{\alpha_t(0)}{\alpha_t(1)}.
\end{equation}

\vspace{1ex}
\noindent\textbf{Recursive characterization.}  
The following lemma establishes how $\Lambda_T^{\text{HMM}}$ can be updated in closed form.

\begin{lemma}[Recursive LLR]
\label{lem:recursion}
\quad

Let $\ell_t := \log \tfrac{p_0(\mathbf z_t)}{p_1(\mathbf z_t)}$. Then
\begin{eqnarray}
&\Lambda_T^{\text{HMM}} = \ell_t + \log\dfrac{a_{00}\,r_{t-1} + a_{10}(1-r_{t-1})}{a_{01}\,r_{t-1} + a_{11}(1-r_{t-1})},& \\[1ex]
&r_{t-1} := P(S_{t-1}=0\mid \mathbf z_1^{t-1}) = \sigma(\Lambda_{t-1}^{\text{HMM}}),&\\[1ex]
&\sigma(x)=\dfrac{1}{1+e^{-x}}.&
\end{eqnarray}
\end{lemma}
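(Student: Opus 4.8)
The plan is to unfold the forward recursion \eqref{eq:alpha_T} a single step and then exploit the scale-invariance of a posterior ratio. First I would start from the identity $\Lambda_T^{\text{HMM}} = \log\frac{\alpha_t(0)}{\alpha_t(1)}$ and substitute $\alpha_t(k) = p_k(\mathbf{z}_t)\sum_{j=0}^{1}\alpha_{t-1}(j)\,a_{jk}$ for both $k=0$ and $k=1$. Since the emission factor $p_k(\mathbf{z}_t)$ pulls out of each sum, the ratio $\alpha_t(0)/\alpha_t(1)$ splits into a product of the emission-density ratio $p_0(\mathbf{z}_t)/p_1(\mathbf{z}_t)$ and a ``transition-mixing'' ratio assembled from the $\alpha_{t-1}(j)$. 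Taking logarithms turns this product into the additive term $\ell_t$ plus the logarithm of the mixing ratio.

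The second step is to re-express the mixing ratio in terms of the normalized posterior $r_{t-1}$ rather than the unnormalized forward variables. Here I would divide both the numerator $\alpha_{t-1}(0)a_{00}+\alpha_{t-1}(1)a_{10}$ and the denominator $\alpha_{t-1}(0)a_{01}+\alpha_{t-1}(1)a_{11}$ by the common constant $\alpha_{t-1}(0)+\alpha_{t-1}(1)$. Because this constant is identical in both, it cancels and leaves exactly $r_{t-1}=\frac{\alpha_{t-1}(0)}{\alpha_{t-1}(0)+\alpha_{t-1}(1)}$ and $1-r_{t-1}$ in place of the raw $\alpha_{t-1}(j)$, reproducing the claimed expression $\log\frac{a_{00}r_{t-1}+a_{10}(1-r_{t-1})}{a_{01}r_{t-1}+a_{11}(1-r_{t-1})}$.

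Finally I would verify the sigmoid identity $r_{t-1}=\sigma(\Lambda_{t-1}^{\text{HMM}})$ separately, since this is the one observation that is not a pure substitution. Writing $\Lambda_{t-1}^{\text{HMM}}=\log\frac{\alpha_{t-1}(0)}{\alpha_{t-1}(1)}$ and inserting it into $\sigma(x)=\frac{1}{1+e^{-x}}$, the expression $\frac{1}{1+\alpha_{t-1}(1)/\alpha_{t-1}(0)}$ simplifies to $\frac{\alpha_{t-1}(0)}{\alpha_{t-1}(0)+\alpha_{t-1}(1)}$, which is precisely the definition of $r_{t-1}$. This closes the recursion on the single scalar $r_{t-1}$ (equivalently $\Lambda_{t-1}^{\text{HMM}}$), which is exactly what makes the two-state filter one-dimensional.

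The derivation is essentially mechanical, so there is no deep obstacle; the only points requiring care are the index bookkeeping (the recursion advances $t-1\to t$ while the statement is written at the terminal time) and the justification that the unnormalized-to-normalized conversion is legitimate precisely because the normalizing constant cancels in the ratio. I would therefore present the sigmoid step explicitly rather than fold it silently into the algebra, so that the recursion is seen to update the scalar posterior directly.
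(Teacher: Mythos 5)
Your proposal is correct and follows exactly the route the paper takes: unfold the forward recursion one step, factor out the emission ratio to obtain $\ell_t$, and normalize the previous forward variables by $\alpha_{t-1}(0)+\alpha_{t-1}(1)$ to replace them with $r_{t-1}$ and $1-r_{t-1}$, closing the loop via the sigmoid identity. The paper merely states this in one line, so your write-up is a more explicit version of the same argument.
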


\begin{proof}
This follows straightforwardly from the forward recursion $\alpha_t(k)=p_k(\mathbf z_t)\sum_j \alpha_{t-1}(j)a_{jk}$ and the definition $r_{t-1} = \alpha_{t-1}(0)/(\alpha_{t-1}(0)+\alpha_{t-1}(1))$.
\end{proof}

\vspace{1ex}
\noindent\textbf{Instantaneous LLR.} The first term $\ell_t$ can be explicitly expressed for Gaussian emissions as
\begin{eqnarray}
\ell_t = \tfrac12 \Big[ \log\tfrac{|\Sigma_1|}{|\Sigma_0|}\!-\!(\mathbf z_t\!-\!\mu_0)^\top \Sigma_0^{-1} (\mathbf z_t\!-\!\mu_0)&&\\
&&\hspace{-13ex} + (\mathbf z_t\!-\!\mu_1)^\top \Sigma_1^{-1} (\mathbf z_t\!-\!\mu_1) \Big].
\nonumber
\end{eqnarray}

Thus, the recursion in Lemma~\ref{lem:recursion} links the instantaneous LLR with the state transition probabilities.

\vspace{1ex}
\noindent\textbf{Distributional moments.}  
To analyze performance, we require the mean and variance of quadratic forms in Gaussian random vectors.

\begin{proposition}[Moments of quadratic forms]
\label{prop:moments_quad}
\quad\\
For $Q_A(\mathbf z)=(\mathbf z-\mu_A)^\top A (\mathbf z-\mu_A)$, $\mathbf z\sim\mathcal N(\mu_B,\Sigma_B)$:
\begin{equation}
\mathbb E[Q_A(\mathbf z)] = \operatorname{tr}(A\Sigma_B) + (\mu_B-\mu_A)^\top A (\mu_B-\mu_A),
\end{equation}
\begin{equation}
\operatorname{Var}[Q_A(\mathbf z)]\!=\!
2\,\operatorname{tr}((A\Sigma_B)^2)\!+\!4(\mu_B\!-\!\mu_A)^\top A \Sigma_B A (\mu_B\!-\!\mu_A).
\end{equation}
Hence, the moments of $\ell_t$ follow directly by linearity.

\end{proposition}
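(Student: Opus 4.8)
The plan is to reduce the computation to moments of a standard normal vector through an affine shift followed by a whitening representation, exploiting that the quadratic and linear contributions decouple because the odd moments of a centered Gaussian vanish. Throughout I take $A$ symmetric, which is without loss of generality since $\mathbf z^\top A \mathbf z = \mathbf z^\top \tfrac12(A+A^\top)\mathbf z$; for asymmetric $A$ one simply replaces $A$ by its symmetric part in every expression below.

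First I would center the observation. Writing $\mathbf y := \mathbf z - \mu_B \sim \mathcal N(\mathbf 0,\Sigma_B)$ and $\delta := \mu_B-\mu_A$, symmetry of $A$ yields the expansion
\begin{equation}
Q_A(\mathbf z) = \mathbf y^\top A\,\mathbf y + 2\,\delta^\top A\,\mathbf y + \delta^\top A\,\delta .
\end{equation}
The mean then follows immediately: $\mathbb E[\mathbf y^\top A\,\mathbf y]=\operatorname{tr}\!\big(A\,\mathbb E[\mathbf y\mathbf y^\top]\big)=\operatorname{tr}(A\Sigma_B)$ by the cyclic property of the trace, the linear term has zero mean because $\mathbb E[\mathbf y]=\mathbf 0$, and the final term is deterministic, giving the claimed $\mathbb E[Q_A(\mathbf z)]$.

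For the variance the deterministic term drops, so I would expand $\operatorname{Var}[\mathbf y^\top A\,\mathbf y + 2\,\delta^\top A\,\mathbf y]$ into the variance of the quadratic part, the variance of the linear part, and twice their covariance. The cross-covariance vanishes: it reduces to the third-order moment $\mathbb E[(\mathbf y^\top A\,\mathbf y)(\delta^\top A\,\mathbf y)]$ of the zero-mean Gaussian $\mathbf y$, which is zero by oddness. The linear part is scalar Gaussian with $\operatorname{Var}[2\,\delta^\top A\,\mathbf y]=4\,\delta^\top A\Sigma_B A\,\delta$, matching the second summand in the claim.

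The main obstacle is the variance of the pure quadratic form $\mathbf y^\top A\,\mathbf y$, and the cleanest route is whitening. Representing $\mathbf y=\Sigma_B^{1/2}\mathbf u$ with $\mathbf u\sim\mathcal N(\mathbf 0,\mathbf I)$ gives $\mathbf y^\top A\,\mathbf y=\mathbf u^\top M\mathbf u$ with symmetric $M:=\Sigma_B^{1/2}A\Sigma_B^{1/2}$. Diagonalizing $M=U\Lambda U^\top$ and using the rotational invariance of the standard normal ($U^\top\mathbf u\sim\mathcal N(\mathbf 0,\mathbf I)$), the form becomes a weighted sum $\sum_i\lambda_i v_i^2$ of independent $\chi^2_1$ variables, each of variance $2$, so that $\operatorname{Var}[\mathbf u^\top M\mathbf u]=2\sum_i\lambda_i^2=2\operatorname{tr}(M^2)$. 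The trace identity $\operatorname{tr}(M^2)=\operatorname{tr}\!\big((A\Sigma_B)^2\big)$, obtained by cyclic permutation, recovers the first summand. Combining the three pieces yields the stated variance, and the moments of $\ell_t$ then follow by linearity, since $\ell_t$ is an affine combination of two such quadratic forms plus a constant.
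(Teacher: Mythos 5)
Your proof is correct and complete. The paper itself states this proposition without any proof, treating the mean and variance of a Gaussian quadratic form as standard identities, so there is no in-paper argument to compare against; your route --- centering at $\mu_B$, observing that the cross term between the quadratic and linear parts vanishes by oddness of third moments, and whitening plus diagonalization to reduce $\operatorname{Var}[\mathbf y^\top A\,\mathbf y]$ to a weighted sum of independent $\chi^2_1$ variables --- is exactly the textbook derivation and is sound, including the reduction to the symmetric part of $A$ (which is moot here since $A\in\{\Sigma_0^{-1},\Sigma_1^{-1}\}$ is symmetric). The only caveat, inherited from the paper's own phrasing, is the closing remark: linearity gives $\mathbb E[\ell_t]$ directly, but $\operatorname{Var}[\ell_t]$ additionally requires the covariance between the two quadratic forms $Q_{\Sigma_0^{-1}}$ and $Q_{\Sigma_1^{-1}}$ (obtainable by the same whitening argument), not just their individual variances.
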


\vspace{1ex}
\noindent\textbf{Equal-covariance case.}  
When $\Sigma_0=\Sigma_1=\Sigma$, the instantaneous LLR simplifies to an affine form
\begin{align}
\ell_t = w^\top \mathbf z_t + \kappa, \quad & \text{where}\quad w=\Sigma^{-1}(\mu_1-\mu_0)\\
&\text{with}\quad \kappa=-\tfrac12(\mu_0^\top\Sigma^{-1}\mu_0-\mu_1^\top\Sigma^{-1}\mu_1).\nonumber
\end{align}

In this case, $\ell_t$ conditioned on $S_t=k$ is Gaussian, $i.e.$
\begin{equation}
\ell_t\mid S_t = k \sim \mathcal N(m_k, v), \; m_k=w^\top\mu_k+\kappa,\; v=w^\top\Sigma w.
\end{equation}

\vspace{1ex}
\noindent\textbf{Recursive distribution of $\Lambda_T^{\text{HMM}}$.}  
The joint effect of temporal correlation and Gaussian emissions can now be summarized as follows.

\begin{theorem}[PDF/CDF recursion, 2-state]
\label{thm:pdf_recursion_2state}
\quad

Let $p_t^{(k)}(y)$ denote the PDF of $\Lambda_T^{\text{HMM}}$ conditioned on $S_t=k$. With $f(x)=\log\tfrac{a_{00}\sigma(x)+a_{10}(1-\sigma(x))}{a_{01}\sigma(x)+a_{11}(1-\sigma(x))}$ and $\phi(\cdot;\mu,v)$ the Gaussian PDF:
\begin{align}
p_t^{(k)}(y) &= \sum_{i=0}^1 a_{ik} \int\limits_{\mathbb R} \phi(y; m_k+f(x),v)\, p_{t-1}^{(i)}(x)\, dx,\\
F_t^{(k)}(\gamma) &= \sum_{i=0}^1 a_{ik} \int\limits_{\mathbb R} \Phi\!\Big(\frac{\gamma\!-\!m_k\!-\!f(x)}{\sqrt v}\Big)\, p_{t-1}^{(i)}(x)\, dx.
\end{align}
\end{theorem}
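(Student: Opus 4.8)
The plan is to reduce the statement to a one-step density-propagation argument built on the scalar recursion already isolated in Lemma~\ref{lem:recursion}. Writing $\Lambda_t$ for $\Lambda_T^{\text{HMM}}$, that lemma supplies the deterministic update
\begin{equation}
\Lambda_t = \ell_t + f(\Lambda_{t-1}),
\end{equation}
with $f$ exactly as defined in the theorem. In the equal-covariance regime the instantaneous term is conditionally Gaussian, $\ell_t \mid S_t=k \sim \mathcal N(m_k,v)$, and --- crucially for the Markov structure --- $\ell_t$ depends on the past only through the current state $S_t$, since it is a fixed function of $\mathbf z_t$ alone. I would read $p_t^{(k)}$ as the forward (joint) sub-density $p_t^{(k)}(y)\,dy = P(\Lambda_t\in[y,y+dy],\,S_t=k)$, i.e.\ the continuous analogue of the forward variable $\alpha_t(k)$; this is what makes the transition weights $a_{ik}$ (rather than the backward transitions) the correct multipliers, and the normalized version recovers the genuine conditional law.

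The core computation then proceeds by conditioning on the previous state. First I would expand
\begin{equation}
p_t^{(k)}(y) = \sum_{i=0}^1 \int_{\mathbb R} p\big(\Lambda_t=y,\,S_t=k,\,S_{t-1}=i,\,\Lambda_{t-1}=x\big)\,dx,
\end{equation}
and factor the integrand using the HMM chain rule: the pair $(\Lambda_{t-1}=x,\,S_{t-1}=i)$ contributes $p_{t-1}^{(i)}(x)$; the transition $S_{t-1}=i\to S_t=k$ contributes $a_{ik}$; and, conditioned on $S_t=k$ and $\Lambda_{t-1}=x$, the update $\Lambda_t=\ell_t+f(x)$ together with $\ell_t\sim\mathcal N(m_k,v)$ (independent of the past given $S_t$) yields the shifted Gaussian $\phi(y;m_k+f(x),v)$. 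Substituting and collecting terms gives exactly the claimed PDF recursion. The CDF identity follows by integrating this recursion in $y$ over $(-\infty,\gamma]$, invoking Fubini to exchange the $y$- and $x$-integrations, and using $\int_{-\infty}^{\gamma}\phi(y;m_k+f(x),v)\,dy = \Phi\big((\gamma-m_k-f(x))/\sqrt v\big)$.

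The main obstacle is not the computation but justifying the factorization rigorously: I must verify the two conditional-independence facts that collapse the chain rule to three factors --- namely that $\ell_t$ is independent of $(\Lambda_{t-1},S_{t-1})$ given $S_t$, and that $\Lambda_{t-1}$ is independent of $S_t$ given $S_{t-1}$, so the past enters only through $p_{t-1}^{(i)}$. Both follow from the standard HMM conditional-independence graph ($S_{t-1}\to S_t$, $S_t\to\mathbf z_t$, with $\Lambda_{t-1}$ a measurable function of $\mathbf z_1^{t-1}$), but they should be stated explicitly, since it is precisely their failure that would replace $a_{ik}$ by the backward transitions. A secondary point worth flagging is the loose use of ``conditioned'' in the statement: the recursion as written holds for the joint sub-density, and one recovers the true conditional PDF only after dividing by $\sum_k\int_{\mathbb R} p_t^{(k)}(y)\,dy$.
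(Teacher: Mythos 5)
The paper states Theorem~\ref{thm:pdf_recursion_2state} with no proof at all, so there is no argument of record to compare against; your derivation is the natural one and it is correct. The one-step propagation built on Lemma~\ref{lem:recursion}'s update $\Lambda_t=\ell_t+f(\Lambda_{t-1})$, the expansion over $(S_{t-1},\Lambda_{t-1})$, and the two conditional-independence facts you isolate ($\ell_t\perp(\Lambda_{t-1},S_{t-1})\mid S_t$ from the emission structure, and $\Lambda_{t-1}\perp S_t\mid S_{t-1}$ from the Markov property) are exactly what is needed to collapse the chain rule to the three claimed factors, and the Fubini step yielding the CDF form is routine. Your most valuable observation is the one the paper glosses over: with the weights $a_{ik}$ as written, the recursion propagates the joint sub-densities $p_t^{(k)}(y)\,dy=P(\Lambda_t\in[y,y+dy],\,S_t=k)$, i.e.\ the continuous analogue of the forward variables $\alpha_t(k)$, and \emph{not} the conditional PDFs announced in the theorem statement; a genuinely conditional recursion would carry the backward transition weights $a_{ik}\rho_{t-1}(i)/\rho_t(k)$. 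Indeed the stated recursion cannot preserve normalization of conditional densities, since $\sum_i a_{ik}\neq 1$ in general, which confirms your reading.

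One small correction to your closing remark: the total mass $\sum_k\int_{\mathbb R}p_t^{(k)}(y)\,dy$ equals $1$ identically under the sub-density interpretation, so dividing by it recovers nothing. To obtain the true conditional PDF given $S_t=k$ you must divide $p_t^{(k)}$ by its \emph{own} mass $\int_{\mathbb R}p_t^{(k)}(y)\,dy=P(S_t=k)$. The same normalization caveat propagates to the corollary, where $F_t^{(0)}(\gamma_0)$ is identified with $P_{\mathrm{FA}}(t)$: under the sub-density reading this is $P(\Lambda_t\le\gamma_0,\,S_t=0)$ rather than the conditional false-alarm probability, and the two differ by the factor $P(S_t=0)$. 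Stating this interpretation explicitly, as you propose, would make the theorem precise.
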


From these recursions, one can directly compute the operating characteristics of the detector.

\begin{corollary}[Detection / false alarm probabilities]
For thresholds $\gamma_0<0<\gamma_1$:
\begin{equation}
P_\mathrm{FA}(t)=F_t^{(0)}(\gamma_0),\quad P_\mathrm{D}(t)=F_t^{(1)}(\gamma_0).
\end{equation}
\end{corollary}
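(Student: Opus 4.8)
The plan is to recognize that this corollary is a direct translation of the decision rule into operating-characteristic language, so the proof reduces to matching the event definitions against the conditional CDF supplied by Theorem~\ref{thm:pdf_recursion_2state}. First I would fix the decision geometry: by the rule analogous to \eqref{eq:sequential_decision}, the test declares \emph{detect spoofing}---i.e.\ decides in favour of $H_1$---exactly on the event $\{\Lambda_t^{\text{HMM}}\le\gamma_0\}$, with the upper threshold $\gamma_1$ and the continuation region playing no role in this particular decision. This is the key observation that collapses the sequential rule to a single lower-tail event for the purpose of characterising $P_\mathrm{FA}$ and $P_\mathrm{D}$.

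Next I would invoke the definitions of the two operating probabilities at slot $t$. A false alarm is the event of deciding $H_1$ while the true hidden state is legitimate, $S_t=0$, and a correct detection is the event of deciding $H_1$ while the true state is the spoofer, $S_t=1$; hence
\begin{align}
P_\mathrm{FA}(t) &= P\big(\Lambda_t^{\text{HMM}}\le\gamma_0 \,\big|\, S_t=0\big),\\
P_\mathrm{D}(t) &= P\big(\Lambda_t^{\text{HMM}}\le\gamma_0 \,\big|\, S_t=1\big).
\end{align}
Theorem~\ref{thm:pdf_recursion_2state} defines $p_t^{(k)}$ as the PDF of $\Lambda_t^{\text{HMM}}$ conditioned on $S_t=k$, so that $F_t^{(k)}(\gamma)=\int_{-\infty}^{\gamma}p_t^{(k)}(y)\,dy = P(\Lambda_t^{\text{HMM}}\le\gamma\mid S_t=k)$ is precisely the conditional CDF. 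Evaluating this CDF at the single point $\gamma=\gamma_0$ and substituting $k=0$ and $k=1$ immediately yields $P_\mathrm{FA}(t)=F_t^{(0)}(\gamma_0)$ and $P_\mathrm{D}(t)=F_t^{(1)}(\gamma_0)$, which is the claim. The temporal correlation induced by the transition matrix $A$ is already absorbed into $F_t^{(k)}$ through the recursion of Theorem~\ref{thm:pdf_recursion_2state}, so nothing further needs to be tracked.

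The step demanding the most care is conceptual rather than computational: one must be clear that these are \emph{per-slot} (instantaneous) probabilities conditioned on the current hidden state $S_t=k$, and not the overall error probabilities of the full sequential test. The latter would be first-passage quantities---the probability that $\Lambda^{\text{HMM}}$ crosses $\gamma_0$ before $\gamma_1$---and would require analysing the joint law of the stopped process rather than the marginal law at time $t$. I would therefore state explicitly that $P_\mathrm{FA}(t)$ and $P_\mathrm{D}(t)$ denote the slot-$t$ crossing probabilities, so that the marginal conditional CDF $F_t^{(k)}$ is exactly the right object and no stopping-time argument is needed. With that interpretation pinned down, the result is immediate from the definitions.
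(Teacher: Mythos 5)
Your proof is correct and follows exactly the reasoning the paper leaves implicit: both operating probabilities are lower-tail events $\{\Lambda_t^{\text{HMM}}\le\gamma_0\}$ conditioned on $S_t=0$ and $S_t=1$ respectively, read off from the conditional CDF $F_t^{(k)}$ of Theorem~\ref{thm:pdf_recursion_2state}. Your added clarification that these are per-slot crossing probabilities rather than first-passage error probabilities of the stopped sequential test is a useful precision the paper does not state, but it does not change the argument.
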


\vspace{1ex}
\noindent\textbf{Steady-state approximation.}  
To obtain tractable closed-form expressions, we linearize $f(x)\approx c_0+c_1 x$ around the mean $\mu_\Lambda$. This yields an AR(1) approximation
\begin{equation}
\Lambda_T^{\text{HMM}} \!\approx\! c_1 \Lambda_{t-1}^{\text{HMM}} + (\ell_t + c_0), \quad 
\mu_\Lambda \!=\! \tfrac{\bar m+c_0}{1-c_1},\; \sigma_\Lambda^2 \!=\! \tfrac{\operatorname{Var}(\ell_t)}{1-c_1^2},
\end{equation}
so that $\Lambda_T^{\text{HMM}}$ converges to $\mathcal N(\mu_\Lambda,\sigma_\Lambda^2)$ in steady state. Detection and false-alarm probabilities then admit closed-form expressions via the Gaussian CDF $\Phi$.

\vspace{1ex}
\noindent\textbf{Numerical evaluation.}  
For general cases where closed-form expressions are intractable, the integral recursions can be efficiently evaluated using standard quadrature techniques.

\section{Proposed Robust Learning-Aided Authentication Framework with LoS Blockage}
\label{sec:PLA-LoS}

In realistic \ac{MIMO} scenarios, long-term blockage of the \ac{LoS} path between Alice and Bob can occur due to obstacles or environmental changes. 
To account for such events, we extend the previous AR-PLA framework by incorporating a \emph{LoS/NLoS mixed channel model} and a modified sequential authentication algorithm that remains robust under prolonged \ac{LoS} blockage.

The LoS/NLoS \ac{MIMO} channel is modeled as a Rician mixture, given by
\begin{align}
\label{eq:H_full_LosMNLOS}
\mathbf{H}_X(t) &= \sqrt{\tfrac{K_{\text{Rician}}(t)}{K_{\text{Rician}}(t)+1}} \mathbf{H}_X^{\text{LoS}}(t)\\
&\hspace{10ex}+ \sqrt{\tfrac{1}{K_{\text{Rician}}(t)+1}} \underbrace{\mathbf{R}_r^{1/2} \mathbf{H}_{w,X}^{\text{NLoS}}(t) \mathbf{R}_t^{1/2}}_{\text{temporal + spatial correlation}},\nonumber 
\end{align}
where \(\mathbf{H}_X^{\text{LoS}}(t)\) is the deterministic \ac{LoS} component with optional slow phase drift
\begin{equation}
\mathbf{H}_X^{\text{LoS}}(t) = \mathbf{H}_X^{\text{LoS}}(t-1) \, e^{j \phi(t)}, \quad \phi(t) \sim \mathcal{N}(0,\sigma_\phi^2),
\end{equation}
\(\mathbf{H}_{w,X}^{\text{NLoS}}(t)\) evolves as a temporally correlated state-space process given by
\begin{equation}
\mathrm{vec}\big(\mathbf{H}_{w,X}^{\text{NLoS}}(t)\big) = \mathbf{F}_t \, \mathrm{vec}\big(\mathbf{H}_{w,X}^{\text{NLoS}}(t-1)\big) + \mathbf{w}_X(t),
\end{equation}
with $\mathbf{w}_X(t) \sim \mathcal{CN}(\mathbf{0}, \mathbf{Q}_t)$ and spatial correlation is applied via the Kronecker model with \(\mathbf{R}_r, \mathbf{R}_t\).

In addition, the Long-Term \ac{LoS} blockage model can be expressed as
\begin{equation}
\end{equation}
\begin{equation*}
K_{\text{Rician}}(t) =
\begin{cases}
K_0, & \text{LoS available},\\
0, & \text{LoS blocked for } t \in [t_b, t_b + T_{\text{block}}].
\end{cases}
\end{equation*}

During blockage, only the \ac{NLoS} component contributes.

The noisy channel observations under \ac{LoS} blockage can be expressed as
\begin{equation}
\hat{\mathbf{H}}_X(t) = \mathbf{H}_X(t) + \mathbf{N}_X(t), \quad
\mathbf{N}_X(t) \sim \mathcal{CN}(\mathbf{0}, \sigma^2 \mathbf{I}).
\label{eq:3-state_noisy_channel}
\end{equation}

Finally, deep feature embeddings \(\mathbf{z}_t\) are extracted from \(\hat{\mathbf{H}}_X(t)\) using a contrastive learning encoder. 
During \ac{LoS} blockage, the channel statistics change abruptly, requiring adaptive emission parameters, given by
\begin{align}
\mathbf{z}_t \mid S_t &= k \sim \mathcal{N}(\boldsymbol{\mu}_k(t), \boldsymbol{\Sigma}_k(t)), \label{eq:zt}\\
\boldsymbol{\mu}_k(t) &= \beta \boldsymbol{\mu}_k(t-1) + (1-\beta) \mathbf{z}_t, \label{eq:3state_mu}\\
\boldsymbol{\Sigma}_k(t) &= \beta \boldsymbol{\Sigma}_k(t-1) + (1-\beta)(\mathbf{z}_t-\boldsymbol{\mu}_k(t))(\mathbf{z}_t-\boldsymbol{\mu}_k(t))^\top. \label{eq:3state_Sigma}
\end{align}

Using a smaller \(\beta\) during \ac{LoS} blockage helps with faster adaptation to \ac{NLoS}-only statistics.

\subsection{HMM with LoS/NLoS Hidden States}

To leverage NLOS features, we propose the following 3-State \ac{HMM} procedure as follows.
\begin{equation*}
S_t \in \{0,1,2\}, \quad
\begin{cases}
0: \text{Alice, LoS available},\\
1: \text{Alice, LoS blocked (NLoS)},\\
2: \text{Eve (spoofing)}.
\end{cases}
\end{equation*}
\begin{equation}
\end{equation}

If we model the transition matrix $\mathbf{A}_{\text{LoSB}}$ as
\begin{equation}
\mathbf{A}_{\text{LoSB}} =
\begin{bmatrix}
a_{00} & a_{01} & a_{02} \\
a_{10} & a_{11} & a_{12} \\
a_{20} & a_{21} & a_{22} \\
\end{bmatrix}, \quad \sum_{j=0}^2 a_{ij} = 1,
\end{equation}
the forward recursion can be computed as
\begin{equation}
\label{eq:ForRec}
\alpha_t(k) = p(\mathbf{z}_t \mid S_t = k) \sum_{j=0}^2 \alpha_{t-1}(j) a_{jk}, \quad k=0,1,2,
\end{equation}
where the posterior probabilities are given by
\begin{equation}
P(S_t = k \mid \mathcal{Z}_1^T) = \frac{\alpha_t(k)}{\sum_{j=0}^2 \alpha_t(j)}.
\end{equation}

Accordingly, the log-posterior ratio for authentication is defined as
\begin{equation}
\Lambda_T^{\text{LoSB-HMM}} := \log \frac{P(S_t \in \{0,1\} \mid \mathcal{Z}_1^T)}{P(S_t = 2 \mid \mathcal{Z}_1^T)}.
\label{eq:3state_Log-posterior_ratio}
\end{equation}

Finally, the sequential decision rule under LoS blockage can be expressed as
\begin{equation*}
\Lambda_T^{\text{LoSB-HMM}}
\begin{cases}
\geq \gamma_1, & \text{authenticate as Alice},\\
\leq \gamma_0, & \text{detect spoofing},\\
\text{otherwise}, & \text{continue observation}.
\end{cases}
\end{equation*}
\begin{equation}
\end{equation}

\begin{remark}
NLoS temporal and spatial correlations capture realistic channel dynamics, even during LoS blockage, while LoS is modeled as deterministic with optional slow phase drift. Online adaptation of emission distributions enhances robustness to sudden LoS loss, and the sequential Bayesian HMM naturally integrates LoS and NLoS states of Alice as legitimate, enabling reliable distinction from Eve.
\end{remark}

\subsection{Analytical characterization of the 3-state LoS-HMM}
\label{sec:3state_proofs}

\paragraph*{Model and notation.}  
Consider the hidden state $S_t\in\{0,1,2\}$, with $0$: Alice (LoS), $1$: Alice (NLoS), and $2$: Eve. Observations follow $\mathbf z_t\mid S_t=k \sim \mathcal N(\mu_k,\Sigma_k)$. Let $A=[a_{ij}]_{i,j=0}^2$ be the transition matrix and $\alpha_t(k)=P(\mathbf z_1^t,S_t=k)$ the forward variables. The posterior mass of Alice’s states is
\begin{equation}
P_A(t) := P(S_t\in\{0,1\}\mid\mathbf z_1^t)=
\frac{\alpha_t(0)+\alpha_t(1)}{\sum_{j=0}^2 \alpha_t(j)}.
\end{equation}
The log-posterior ratio (LoSB-HMM statistic) is
\begin{equation}
\Lambda_T^{\mathrm{LoSB-HMM}}
=\log\frac{P(S_t\in\{0,1\}\mid\mathbf z_1^t)}{P(S_t=2\mid\mathbf z_1^t)}
=\log\frac{\alpha_t(0)+\alpha_t(1)}{\alpha_t(2)}.
\end{equation}

\paragraph*{Exact representation.}
Define $u_{t-1}(j):=P(S_{t-1}=j\mid\mathbf z_1^{t-1})$ and the transition-weighted priors
\begin{equation}
T_k(u_{t-1}) := \sum_{j=0}^2 a_{jk}\,u_{t-1}(j),\quad k=0,1,2.
\end{equation}
For pairwise LLRs $\ell_{k2}(\mathbf z_t)=\log\frac{p_k(\mathbf z_t)}{p_2(\mathbf z_t)}$, $k=0,1$, we obtain:

\begin{lemma}[3-state representation]
\label{lem:3state_repr}
\begin{equation}
\Lambda_T^{\mathrm{LoSB-HMM}}
= \log\!\big( T_0 e^{\ell_{02}(\mathbf z_t)} + T_1 e^{\ell_{12}(\mathbf z_t)} \big)
- \log T_2.
\end{equation}
\end{lemma}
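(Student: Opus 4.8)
The plan is to reduce the three-state statement to the same algebraic manipulation that underlies the two-state recursion of Lemma~\ref{lem:recursion}, exploiting the fact that the overall normalization of the forward variables cancels in the log-ratio. First I would apply the forward recursion \eqref{eq:ForRec}, $\alpha_t(k)=p_k(\mathbf z_t)\sum_{j=0}^2 \alpha_{t-1}(j)\,a_{jk}$, and factor out the total forward mass $C_{t-1}:=\sum_{i=0}^2\alpha_{t-1}(i)$ accumulated up to time $t-1$. Since $u_{t-1}(j)=\alpha_{t-1}(j)/C_{t-1}$ by definition of the posterior, this yields the compact form $\alpha_t(k)=C_{t-1}\,p_k(\mathbf z_t)\,T_k(u_{t-1})$, where $T_k(u_{t-1})=\sum_j a_{jk}u_{t-1}(j)$ is precisely the transition-weighted prior introduced before the lemma.

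Next I would substitute this expression into the definition $\Lambda_T^{\mathrm{LoSB-HMM}}=\log\frac{\alpha_t(0)+\alpha_t(1)}{\alpha_t(2)}$. The common factor $C_{t-1}$ multiplies every term and therefore cancels, leaving $\log\frac{p_0(\mathbf z_t)T_0+p_1(\mathbf z_t)T_1}{p_2(\mathbf z_t)T_2}$. Factoring $p_2(\mathbf z_t)$ out of both numerator and denominator and using $p_k(\mathbf z_t)/p_2(\mathbf z_t)=e^{\ell_{k2}(\mathbf z_t)}$ for $k=0,1$ converts the emission-density ratios into the declared pairwise LLRs, producing $\log\frac{T_0 e^{\ell_{02}(\mathbf z_t)}+T_1 e^{\ell_{12}(\mathbf z_t)}}{T_2}$, which splits into the stated difference of logarithms.

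The derivation is essentially routine; the only point requiring care is the bookkeeping of the normalization. One must check that the same constant $C_{t-1}$ factors out of $\alpha_t(0)$, $\alpha_t(1)$, and $\alpha_t(2)$ simultaneously, which holds because the forward recursion applies the identical prior weighting $\sum_j\alpha_{t-1}(j)a_{jk}$ to each $k$, so the ratio is invariant to the unnormalized scale of the $\alpha_{t-1}(\cdot)$. I would also record that $T_2>0$ (each $a_{jk}\ge 0$ with rows summing to one, at least one $u_{t-1}(j)>0$, and $p_2(\mathbf z_t)>0$ for Gaussian emissions), so that $\log T_2$ and the outer logarithm are well defined. No genuine obstacle arises: the three-state case differs from the two-state Lemma~\ref{lem:recursion} only in that Alice's posterior aggregates states $0$ and $1$, which is handled transparently by the log-sum-exp structure appearing in the numerator.
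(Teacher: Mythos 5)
Your proposal is correct and follows essentially the same route as the paper's proof: apply the forward recursion, normalize by the total forward mass $\sum_i\alpha_{t-1}(i)$ so that the transition-weighted priors $T_k(u_{t-1})$ appear, form the ratio, and take logs after expressing $p_k/p_2$ as $e^{\ell_{k2}}$. Your added remarks on the cancellation of the common normalization and the positivity of $T_2$ are just a more explicit rendering of the same argument.
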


\begin{proof}
From the forward recursion $\alpha_t(k)=p_k(\mathbf z_t)\sum_j\alpha_{t-1}(j)a_{jk}$, normalizing by $\sum_i\alpha_{t-1}(i)$ yields
\begin{equation}
\frac{\alpha_t(0)+\alpha_t(1)}{\alpha_t(2)}=\frac{p_0(\mathbf z_t)T_0+p_1(\mathbf z_t)T_1}{p_2(\mathbf z_t)T_2}.
\end{equation}

Taking logs gives the claim.
\end{proof}

\paragraph*{Equal-covariance case.}
If $\Sigma_k=\Sigma$ for all $k$, then each $\ell_{k2}(\mathbf z)$ is affine, such that
\begin{align}
\ell_{k2}(\mathbf z) = w_{k2}^\top \mathbf z + \kappa_{k2},\;\; &\text{where } w_{k2} = \Sigma^{-1}(\mu_2-\mu_k),\\
&\hspace{-11ex}\text{with } \kappa_{k2}=-\tfrac12(\mu_k^\top\Sigma^{-1}\mu_k-\mu_2^\top\Sigma^{-1}\mu_2).
\end{align}

Thus
\begin{equation}
\mathbf L_t :=
\begin{bmatrix}\ell_{02}(\mathbf z_t)\\ \ell_{12}(\mathbf z_t)\end{bmatrix}
\;\big|\; S_t=s \;\sim\; \mathcal N(\mathbf m_{|s},\mathbf V),
\end{equation}
with mean $\mathbf m_{|s}$ and covariance $\mathbf V$ given by
\begin{eqnarray}
&\mathbf m_{|s}
\begin{bmatrix}w_{02}^\top\mu_s+\kappa_{02}\\
[2pt] w_{12}^\top\mu_s+\kappa_{12}\end{bmatrix},&\\
&\mathbf V=
\begin{bmatrix}w_{02}^\top\Sigma w_{02} & w_{02}^\top\Sigma w_{12}\\[2pt]
w_{12}^\top\Sigma w_{02} & w_{12}^\top\Sigma w_{12}
\end{bmatrix}.&
\label{eq:bivar_normal}
\end{eqnarray}

\begin{theorem}[CDF of $\Lambda_T^{\text{LoSB-HMM}}$ under equal covariances]
\label{thm:3state_cdf}
Conditioned on $u_{t-1}$ and $S_t=s$, the CDF is
\begin{align}
F_{t\mid u}^{(s)}(\gamma)
&=\iint_{\; \mathcal R(\gamma)} \varphi_2(\ell_{02},\ell_{12};\mathbf m_{|s},\mathbf V)\;
d\ell_{02}\,d\ell_{12}, \\
\mathcal R(\gamma)&=\Big\{(\ell_{02},\ell_{12}): T_0 e^{\ell_{02}}+T_1 e^{\ell_{12}}\le e^\gamma T_2\Big\},
\end{align}
where $\varphi_2$ is the bivariate normal PDF. The PDF follows by differentiation.
\end{theorem}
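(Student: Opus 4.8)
The plan is to translate the event $\{\Lambda_T^{\mathrm{LoSB-HMM}}\le\gamma\}$ into a deterministic planar region for the two pairwise log-likelihood ratios and then integrate the already-established bivariate Gaussian law over that region. First I would fix the conditioning: conditioned on $u_{t-1}$, the transition-weighted priors $T_k(u_{t-1})=\sum_j a_{jk}u_{t-1}(j)$ become deterministic constants, since they depend only on the (now fixed) posterior vector $u_{t-1}$ and the transition matrix $A$. With $T_0,T_1,T_2$ frozen, Lemma~\ref{lem:3state_repr} gives $\Lambda_T^{\mathrm{LoSB-HMM}}=\log(T_0 e^{\ell_{02}(\mathbf z_t)}+T_1 e^{\ell_{12}(\mathbf z_t)})-\log T_2$, so the statistic is a fixed deterministic function of the observation-driven pair alone.

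Next I would invert this relation. Because $x\mapsto\log x$ and $x\mapsto e^x$ are strictly increasing, the event $\Lambda_T^{\mathrm{LoSB-HMM}}\le\gamma$ is equivalent to $T_0 e^{\ell_{02}}+T_1 e^{\ell_{12}}\le e^{\gamma}T_2$, which is precisely the region $\mathcal R(\gamma)$. Hence $F_{t\mid u}^{(s)}(\gamma)=P\big((\ell_{02}(\mathbf z_t),\ell_{12}(\mathbf z_t))\in\mathcal R(\gamma)\mid u_{t-1},S_t=s\big)$, reducing the CDF to a probability mass over $\mathcal R(\gamma)$.

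The crucial step is to pin down the conditional law of $\mathbf L_t=[\ell_{02}(\mathbf z_t),\ell_{12}(\mathbf z_t)]^\top$. By the HMM emission/Markov structure with fixed emission parameters, given $S_t=s$ the observation $\mathbf z_t$ is independent of all past observations $\mathbf z_1^{t-1}$, hence of $u_{t-1}$; therefore additionally conditioning on $u_{t-1}$ does not alter its law. In the equal-covariance case each $\ell_{k2}$ is affine in $\mathbf z_t$, so $\mathbf L_t\mid S_t=s\sim\mathcal N(\mathbf m_{|s},\mathbf V)$ with mean and covariance as in \eqref{eq:bivar_normal}. Integrating its density $\varphi_2$ over $\mathcal R(\gamma)$ then yields the claimed CDF, and for the PDF I would differentiate under the integral sign: only the domain $\mathcal R(\gamma)$ depends on $\gamma$, through the level set $T_0 e^{\ell_{02}}+T_1 e^{\ell_{12}}=e^{\gamma}T_2$, so by the Leibniz rule for a moving domain the derivative reduces to a flux of $\varphi_2$ across the boundary curve.

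The main obstacle I anticipate is the factorization argument: one must justify carefully that conditioning simultaneously on the continuous summary $u_{t-1}$ (which freezes the $T_k$ into constants) and on the discrete state $S_t=s$ (which fixes the Gaussian parameters) leaves $\mathbf L_t$ with exactly its unconditional-on-past bivariate normal law. This is precisely where the emission independence of the HMM is invoked, and it is what makes the region $\mathcal R(\gamma)$ deterministic while the integrand remains a clean Gaussian; the region inversion and the integral itself are then routine.
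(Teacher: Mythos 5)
Your proposal is correct and follows exactly the route the paper intends: freeze $T_k$ by conditioning on $u_{t-1}$, use Lemma~\ref{lem:3state_repr} to write $\Lambda_T^{\mathrm{LoSB-HMM}}$ as a monotone function of $(\ell_{02},\ell_{12})$, invert to obtain $\mathcal R(\gamma)$, and integrate the bivariate normal law from \eqref{eq:bivar_normal} over that region (the paper states the theorem as an immediate consequence of these ingredients without writing out a separate proof). Your extra care about the conditional independence of $\mathbf z_t$ from $u_{t-1}$ given $S_t=s$, and the Leibniz/flux argument for the PDF, are sound and go slightly beyond what the paper makes explicit.
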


\paragraph*{Approximate characterizations.}
Since the exact distribution involves bivariate integration, we use approximations:  

1. \emph{Delta-method.}

With $g(x,y)=\log(T_0 e^x+T_1 e^y)-\log T_2$,
\begin{equation}
\Lambda_T^{\text{LoSB-HMM}}=g(\ell_{02},\ell_{12}).
\end{equation}

For $(\ell_{02},\ell_{12})\sim\mathcal N(\mathbf m,\mathbf V)$,
\begin{align}
\mathbb E[\Lambda_T^{\text{LoSB-HMM}}] &\approx g(\mathbf m)+\tfrac12\operatorname{tr}(H_g(\mathbf m)\mathbf V), \\
\operatorname{Var}[\Lambda_T^{\text{LoSB-HMM}}] &\approx \nabla g(\mathbf m)^\top \mathbf V \nabla g(\mathbf m),
\end{align}
with explicit yields
\begin{equation}
\nabla g=\Big(\tfrac{T_0 e^x}{S},\;\tfrac{T_1 e^y}{S}\Big)^\top,\,
H_g=\frac{T_0T_1 e^{x+y}}{S^2}
\begin{bmatrix}1&-1\\-1&1\end{bmatrix},
\end{equation}
evaluated at $\mathbf m$, and such that the Gaussian approximation then yields conditional CDFs.

2. \emph{Log-sum-exp / Laplace.}

Writing
\begin{align}
&\log(T_0 e^X+T_1 e^Y)= \max\{X+\log T_0,Y+\log T_1\}\\
&\hspace{23ex}+\log(1+e^{-|X-Y+\log(T_0/T_1)|}),\nonumber
\end{align}
reveals that $\Lambda_T$ is well-approximated by the dominant affine Gaussian term, with corrections obtained via Laplace or saddlepoint expansions, especially accurate in the tails.

\begin{algorithm}[H]
\caption{LoS/NLoS-Aware Adversarially Robust Learning-Aided Physical Layer Authentication (AR-PLA)}
\label{alg:ar-pla-los-bob}
\begin{algorithmic}[1]
\Require Pre-trained encoder, HMM parameters, thresholds $\gamma_0,\gamma_1$, smoothing factor $\beta$, initial $K_{\text{Rician}}$
\State \textbf{Initialize} emission statistics and HMM forward variables
\For{each time slot $t = 1, 2, \ldots$}
\State Receive noisy CSI observation $\hat{\mathbf{H}}_X(t)$ as in \eqref{eq:3-state_noisy_channel}
\State Extract feature embedding $\mathbf z_t$ using the contrastive encoder
\State Update emission means and covariances with the smoothing rule as in \eqref{eq:3state_mu} and \eqref{eq:3state_Sigma}
\State Run HMM forward recursion using \eqref{eq:ForRec}
\State Compute log-posterior ratio $\Lambda_T^{\text{LoSB-HMM}}$ as in \eqref{eq:3state_Log-posterior_ratio}
\State \textbf{Sequential decision:}
\If{$\Lambda_T^{\text{LoSB-HMM}} \geq \gamma_1$}
    \State Authenticate as Alice
\ElsIf{$\Lambda_T^{\text{LoSB-HMM}} \leq \gamma_0$}
    \State Detect spoofing (Eve)
\Else
    \State Continue observation
\EndIf
\State Adapt emission statistics with smaller $\beta$ if long-term LoS blockage is detected
\EndFor
\Ensure Authentication decision at each time step with adaptive model update
\end{algorithmic}
\end{algorithm}

\paragraph*{Detection performance.}
Finally, with hypotheses $H_0:S_t\in\{0,1\}$ (Alice) and $H_1:S_t=2$ (Eve), threshold $\gamma_0$ yields
\begin{align}
P_{\mathrm{FA}}(t)
&=\Pr[\Lambda_T^{\text{LoSB-HMM}}\le\gamma_0\mid H_0] \nonumber \\
&=\frac{\rho_t(0)F_t^{(0)}(\gamma_0)+\rho_t(1)F_t^{(1)}(\gamma_0)}{\rho_t(0)+\rho_t(1)}, \\
P_{\mathrm{D}}(t)
&=\Pr[\Lambda_T^{\text{LoSB-HMM}}\le\gamma_0\mid H_1]=F_t^{(2)}(\gamma_0),
\end{align}
where $\rho_t(s)=P(S_t=s)$ and $F_t^{(s)}$ is the conditional CDF from Theorem~\ref{thm:3state_cdf} or its approximations.

\paragraph{Recursive propagation and practical considerations.}  
Exact computation of the distribution of $\Lambda_T^{\mathrm{LoSB-HMM}}$ requires tracking the joint posterior vector $u_{t-1} = [P(S_{t-1}=0), P(S_{t-1}=1), P(S_{t-1}=2)]^\top$ and the law of the previous LLR summary. In practice, this can be approximated as follows:  

\begin{enumerate}
\item Maintain $u_{t-1}$ (or a sufficient summary, e.g., the total Alice probability) via the forward algorithm.  
\item For each candidate $u_{t-1}$ (grid point or particle), compute the transition-weighted priors $T_k(u_{t-1})$ and evaluate the conditional bivariate Gaussian integral numerically.  
\item Aggregate over the distribution of $u_{t-1}$ (via exact summation, grid averaging, or particle filtering) to obtain the marginal distribution of $\Lambda_T$.
\end{enumerate}

\begin{table}[b!]
\centering
\caption{Online computational complexity per time step for AR-PLA.}
\begin{tabular}{l l}
\toprule
\textbf{Module} & \textbf{Complexity / Function} \\
\hline
GAN generator & $O(L_g d M_r M_t)$; feature embeddings \\
Encoder & $O(L_e d M_r M_t)$; latent feature extraction \\
Gaussian LLR & $O(N_s d^2)$; log-likelihood computation \\
EMA update & $O(N_s d^2)$; statistics adaptation \\
HMM forward & $O(N_s^2)$; hidden state recursion \\
Log-posterior ratio & $O(N_s)$; authentication decision \\
\hline
\end{tabular}
\label{tab:complexity_onecol}
\end{table}

When all covariances are equal, the integrals reduce to two-dimensional quadratures, and the delta-method Gaussian approximation provides simple closed-form estimates of $F_t^{(s)}(\cdot)$ suitable for threshold design.

\section{Performance Analysis}
\label{sec:perf_alaly}

\subsection{Computational Complexity Analysis}
\label{sec:complexity}

The computational complexities of the proposed method are summarized in Table \ref{tab:complexity_onecol}. As shown, the dominant operations are associated with the \ac{GAN} generator and feature extraction, scaling linearly with the number of layers, latent dimensions, and \ac{CSI} size. Gaussian \ac{LLR} updates scale quadratically with the latent dimension but remain lightweight for typical settings $(d<50)$. The \ac{HMM} forward recursion and decision computations involve only a small number of hidden states and therefore incur negligible cost. Overall, all components are computationally efficient and suitable for real-time implementation, even in moderate-sized \ac{MIMO} systems, ensuring that the proposed authentication framework is practical for online deployment.

\subsection{Simulation Results}
\vspace{-3ex}
\begin{figure}[H]
\centering
\setlength{\abovecaptionskip}{0pt}  
\setlength{\belowcaptionskip}{5pt}  
\includegraphics[width=0.95\columnwidth]{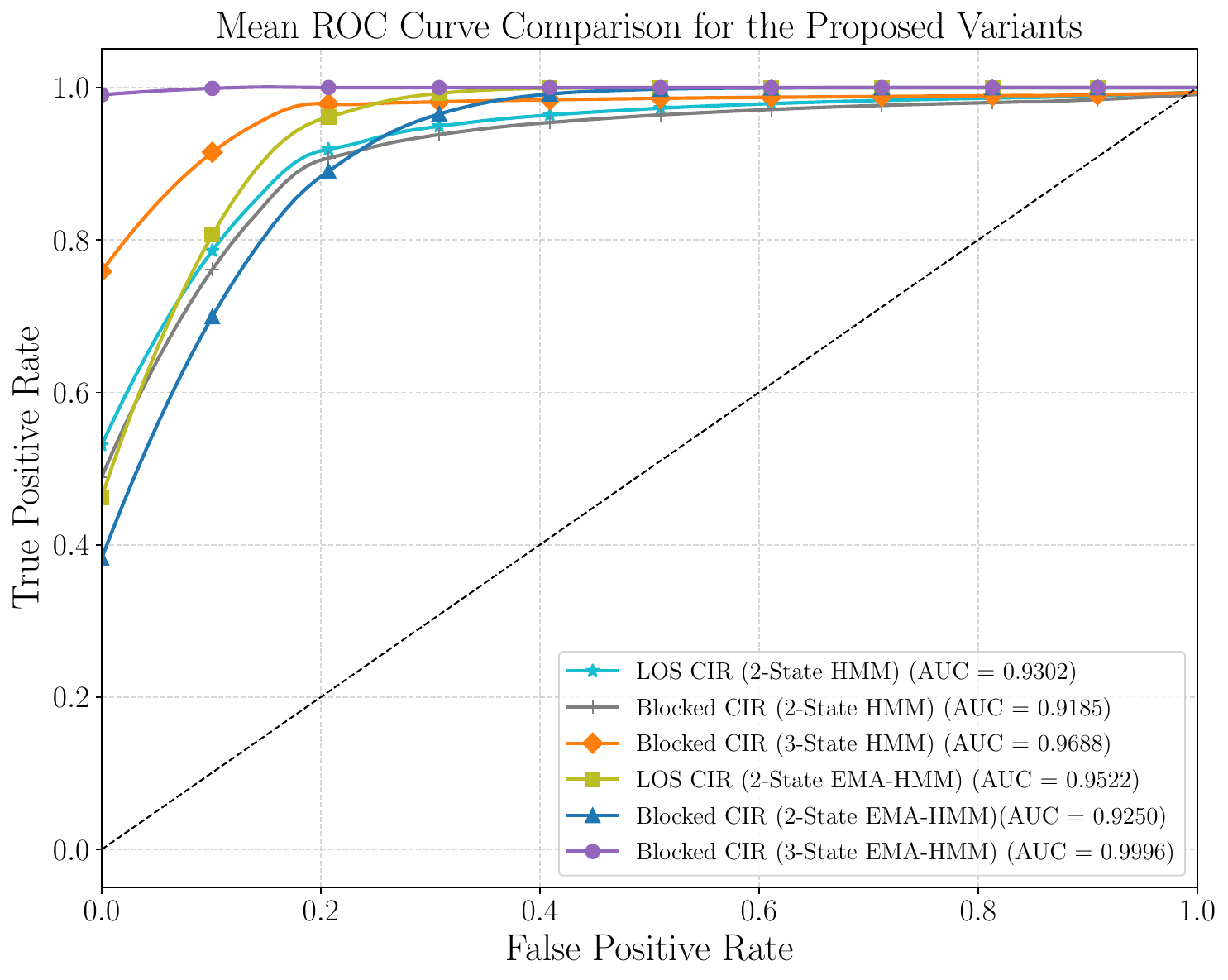}
\caption{ROC performance of the proposed \ac{AR-PLA} scheme under different HMM and blockage settings.}
\label{fig:roc_curves}
\includegraphics[width=\columnwidth]{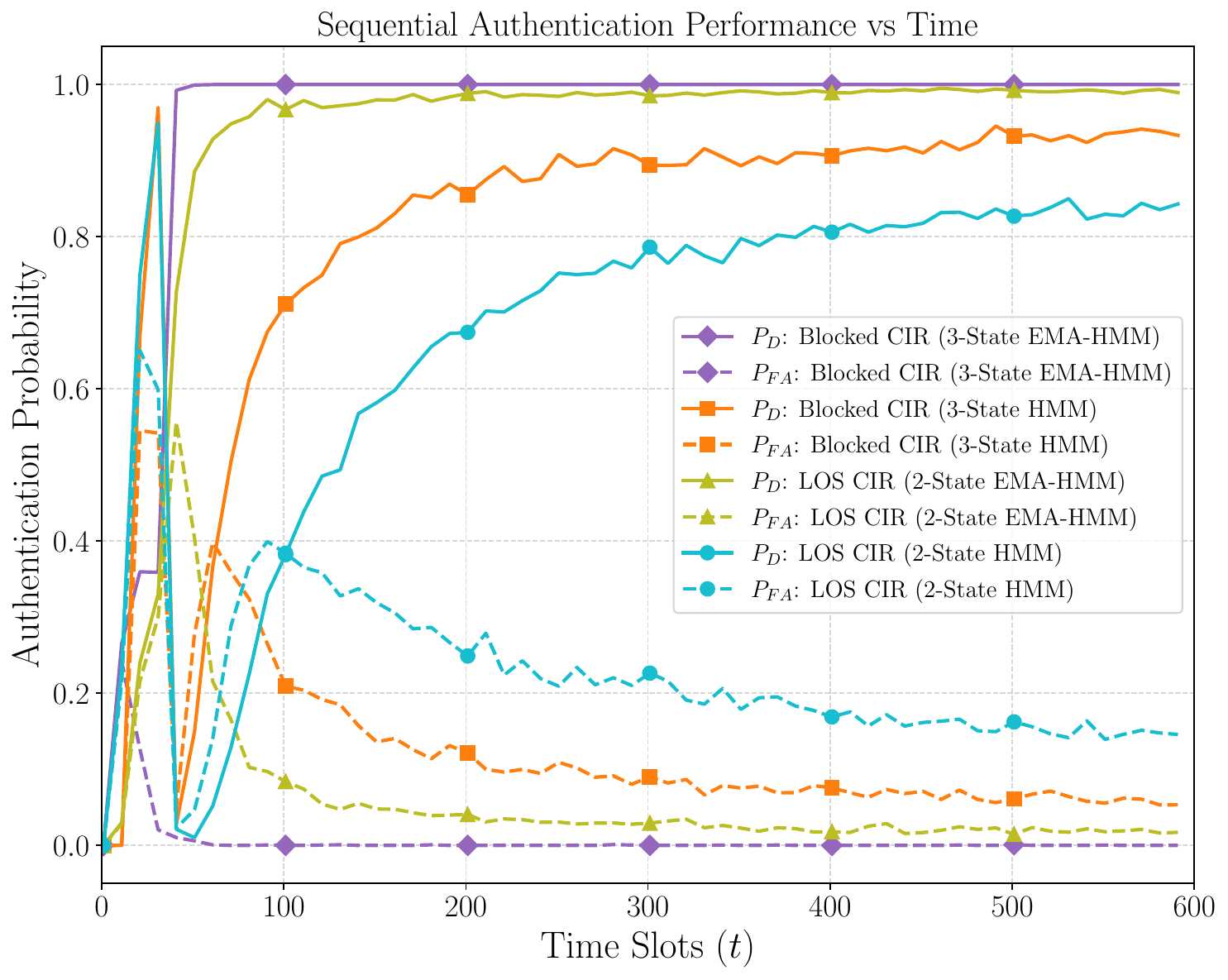}
\vspace{-1ex}
\caption{Authentication probability and convergence behavior of the proposed \ac{AR-PLA} scheme.}
\label{fig:Probability}
\includegraphics[width=\columnwidth]{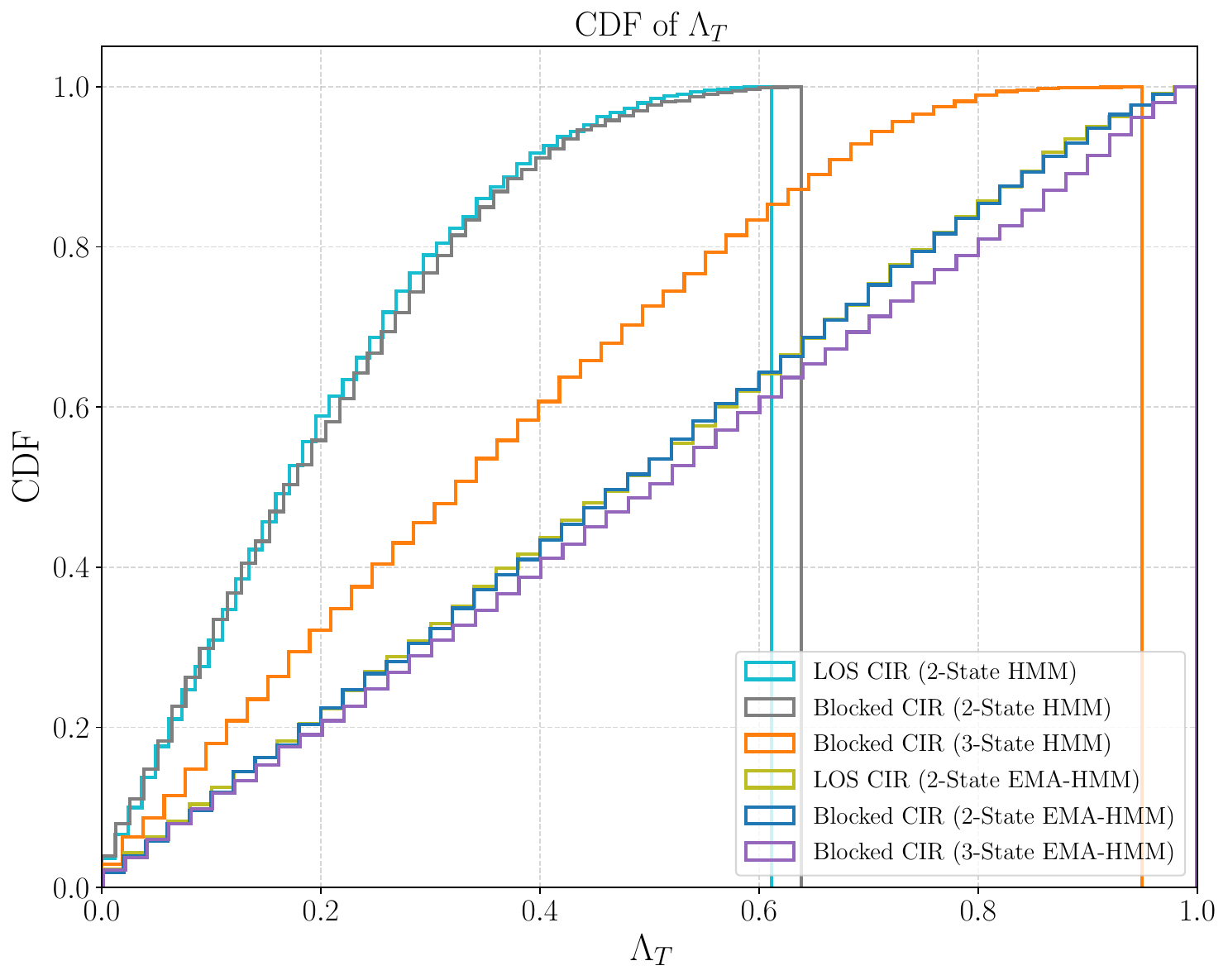}
\caption{CDF of posterior authentication probability for the proposed \ac{AR-PLA} scheme.}
\label{fig:CDF}
\end{figure}

We evaluate the proposed \ac{AR-PLA} framework under the simulation parameters summarized in Table~\ref{tab:sim_params}.

This structured setup provides a consistent baseline for evaluating the impact of different HMM variants, LoS/NLoS blockage modeling, and online adaptation via exponential moving averages (\ac{EMA}) on authentication performance.

Figure~\ref{fig:roc_curves} compares six scenarios of our proposed \ac{PLA} framework. The first three include: (i) 2-state \ac{HMM} in LoS channels, (ii) 2-state \ac{HMM} with potential blockage, and (iii) 3-state \ac{HMM} under blockage. AUC values remain above 0.91 even with a GAN spoofer, dropping from 0.93 to 0.91 for the 2-state HMM under blockage and rising to ~0.97 for the 3-state HMM.

The next three scenarios apply EMA to the same variants, substantially improving performance. EMA enables near-perfect discrimination (\ac{AUC} $\approx 1$) even under combined blockage and GAN spoofing, highlighting the robustness of both the 3-state HMM and EMA smoothing.

As shown in Figure~\ref{fig:Probability}, the 3-state \ac{HMM} improves detection probability, reduces false alarms, and decreases the training time needed for reliable authentication. EMA-based adaptation further enhances performance by boosting detection probability, lowering false alarms, and accelerating convergence. Specifically, the 3-state \ac{HMM} with blockage and EMA achieves an AUC near unity with fewer consecutive decisions. Deploying EMA increases detection probability to nearly 1, reduces false alarms to about 0, and substantially shortens decision time, demonstrating that the combination of 3-state \ac{HMM} and EMA improves both accuracy and efficiency.

Finally, Figure~\ref{fig:CDF} shows the CDF of posterior authentication probabilities. 
The 3-state \ac{HMM} shifts the distribution rightward, indicating reduced authentication errors. 
EMA updates further accentuate this effect, confirming their benefit in fast-varying and non-stationary channel conditions.

\begin{table}[t!]
\centering
\caption{Simulation Parameters for AR-PLA Evaluation}
\label{tab:sim_params}
\begin{tabular}{ll}
\toprule
\textbf{Parameter} & \textbf{Value} \\
\midrule
Signal-to-Noise Ratio (SNR) & 5 dB \\
Transmit/Receive antennas ($M_t=M_r$) & 4 \\
Temporal correlation ($\rho_t$) & 0.7 \\
Embedding dimension ($d$) & 16 \\
Initial distribution (2-state HMM) & $[0.7,\,0.3]^\top$ \\
Initial distribution (3-state HMM) & $[0.45,\,0.45,\,0.1]^\top$ \\
Rician factor ($K_0$) & 10 \\
\bottomrule
\end{tabular}
\end{table}

\section*{Conclusion}

We proposed an \ac{AR-PLA} framework for non-stationary \ac{MIMO} channels, integrating sequential Bayesian decision-making, contrastive feature extraction, and generative adversarial modeling to counter adaptive spoofers. 
Unlike conventional methods assuming channel stationarity or independent observations, our approach accounts for temporal-spatial correlations, \ac{LoS} blockages, and evolving attack strategies, enhancing resilience in realistic environments.

Analytical characterization using 2-state and 3-state \acp{HMM} with exponential moving average adaptation demonstrates significant robustness gains. 
Derived closed-form recursions for log-likelihood ratios and detection probabilities confirm faster convergence and reduced computational complexity compared to classical sequential authentication.

Overall, the framework effectively combines statistical rigor with learning-based adaptability, achieving high detection accuracy and low false alarms, supporting secure and trustworthy \ac{5G}/\ac{6G} and \ac{IoT} wireless systems.

\vspace{-0.5em}
\section*{Acknowledgment}
This work was funded by the German Federal Ministry of Research, Technology and Space (grant 16KISK231 and grant 16KIS2343) and based on the budget passed by the Saxon State Parliament.

\bibliographystyle{ieeetr}

\end{document}